\documentclass[letterpaper, 11 pt]{article} 
\usepackage{amsmath,amsfonts,amssymb,color,amsthm}
\usepackage{nccmath}
\usepackage{epsfig}
\usepackage{psfrag}
\usepackage[T1]{fontenc}
\usepackage{algorithm}
\usepackage{algpseudocode}
\usepackage[dvipsnames]{xcolor}
\usepackage{array}
\usepackage{epstopdf}
\usepackage{cite}
\usepackage{footmisc}
\usepackage{mathtools}
\usepackage{bbm}
\usepackage{bm}
\usepackage{comment}
\usepackage{dsfont}
\usepackage{mathtools}
\usepackage{epstopdf}
\usepackage{tikz}
\usetikzlibrary{automata, positioning, arrows.meta}
\usepackage{relsize}
\usetikzlibrary{shapes,arrows}
\usepackage{cite}
\usepackage{epstopdf}
\usepackage{tcolorbox}
\definecolor{winered}{rgb}{0.5,0,0}
\usepackage{scalerel}
\usepackage{xcolor}
\usepackage[colorlinks]{hyperref}
\usepackage{tcolorbox}
\tcbset{colback=blue!5!white,colframe=blue}

\newcommand{\mc}{\mathcal}

\newcommand{\normlr}[1]{\left\lVert#1\right\rVert}

\newcommand{\stackref}[2]{
\readlist*\mylist{#1}
\stackrel{\mbox{\footnotesize\foreachitem\x\in\mylist[]{\ifnum\xcnt=1\else,\fi\eqref{\x}}}}{#2}
}

\newcommand{\abs}[1]{\lvert #1 \rvert}

\newcommand{\tr}[1]{\text{trace}\left( #1 \right)}

\AtBeginDocument{%
  \hypersetup{
    citecolor=blue,
    linkcolor=winered
  }
}

\DeclarePairedDelimiter\ceil{\lceil}{\rceil}
\DeclareMathOperator*{\argmin}{\arg\!\min}

\DeclarePairedDelimiterX{\norm}[1]{\lVert}{\rVert}{#1}
\DeclareMathOperator{\E}{\mathbb{E}}

\newtheorem{problem}{Problem}

\newtheorem{theorem}{Theorem}

\newtheorem{lemma}{Lemma}

\date{}
\usepackage{cite}
\usepackage[margin=1in]{geometry}

\title{\LARGE 
Boosting-Enabled Robust System Identification of Partially Observed LTI Systems Under Heavy-Tailed Noise
}


\author{Vinay Kanakeri and Aritra Mitra}

\begin{document}
\maketitle
\footnotetext[1]{The authors are with the Department of Electrical and Computer Engineering, North Carolina State University. Email: {\tt \{vkanake, amitra2\}@ncsu.edu}.}
\thispagestyle{empty}
\pagestyle{empty}

\begin{abstract}
We consider the problem of system identification of partially observed linear time-invariant (LTI) systems. Given input-output data, we provide non-asymptotic guarantees for identifying the system parameters under general heavy-tailed noise processes. Unlike previous works that assume Gaussian or sub-Gaussian noise, we consider significantly broader noise distributions that are required to admit only up to the second moment. For this setting, we leverage tools from robust statistics to propose a novel system identification algorithm that exploits the idea of \emph{boosting}. Despite the much weaker noise assumptions, we show that our proposed algorithm achieves sample complexity bounds that nearly match those derived under sub-Gaussian noise. In particular, we establish that our bounds retain a logarithmic dependence on the prescribed failure probability. Interestingly, we show that such bounds can be achieved by requiring just a finite fourth moment on the excitatory input process.
\end{abstract}

\section{Introduction} \label{sec:intro}
System identification is a fundamental problem that involves estimating \emph{unknown} system parameters using noisy data generated from a dynamical process. It is relevant to various disciplines including control theory, economics, time-series forecasting, and machine learning. System identification also forms a core sub-routine in data-driven control/model-based reinforcement learning where one uses the estimated system model for downstream decision-making. To ensure desired performance of such algorithms, it is crucial to quantify the uncertainty in the data-driven estimates of the model. It stands to reason that the nature of such estimates, and the uncertainty intervals around them, will depend on the statistics of the data used for estimation. In this regard, despite the wealth of literature on system identification spanning both classical asymptotic results~\cite{lai1983asymptotic} and more recent finite-time guarantees~\cite{tu2017non, simchowitz1, sarkar2019near}, almost all existing works on the topic crucially rely on the noise processes being either Gaussian or sub-Gaussian, i.e., ``light-tailed". In practice, however, such an idealistic assumption may not hold. Furthermore, estimators that do not account for non-ideal noise processes might lead to poor statistical guarantees that are inadequate for safety-critical real-time feedback control loops. With these points in mind, the \textbf{goal} of this work \emph{is to initiate a study of system identification under more realistic noise processes that are potentially heavy-tailed and admit no more than the second moment.}  

To do so, we ground our study within the following linear time-invariant (LTI) system model:  
\begin{equation}\label{eqn:sys_model}
\begin{split}
x_{t+1} &= A x_t + B u_t + w_t \\
y_t &= C x_t + D u_t + v_t ,
\end{split}
\end{equation}
where $A$, $B$, $C$, and $D$ constitute the unknown parameters of the system, and $\{w_t\}$ and $\{u_t\}$ are the noise sequences. Recent works have considered several variations of the above model and discussed the implications of system stability on the ease of parameter estimation; for a survey of such results, see~\cite{tsiamis2023statistical}. As mentioned earlier, a common theme among previous works is the ideal Gaussian or sub-Gaussian assumption made on the noise sequences. Although such strong assumptions lead to stronger guarantees, it is unclear whether they hold in complex real-world systems. Even if such light-tailed noise assumptions do hold, the fact that they do may not be known a priori.

In light of this fact, we significantly weaken the assumptions on the noise processes. In particular, we consider general heavy-tailed noise processes that admit a finite second moment, and nothing more. For this significantly more general setting, one might ask: \emph{Is it possible to recover similar finite-sample system identification guarantees as under sub-Gaussian noise?} The main contribution of this work is to prove that this is indeed possible via a suitably designed system identification algorithm robust to heavy-tailed noise. 

Similar to \cite{zheng2021nonasymptotic}, we take a two-step approach to estimate the system parameters. Our main result corresponds to learning the first $T$ Markov parameters of the system: 
\begin{equation}
    G = \begin{bmatrix}
        D & CB & CAB & \ldots & CA^{T-2}B 
    \end{bmatrix}. \label{eqn:markov_parameters}
\end{equation}
To estimate the system matrices $A,B,C$, and $D$ from the Markov parameters, one can then readily use the Ho-Kalman algorithm \cite{kalman1966effective}, \cite{oymak2019non}. Our key contributions are as follows. 

$\bullet$ \textbf{Problem Formulation:} Assuming access to multiple independent system trajectories, we study for the first time the problem of non-asymptotic system identification under heavy-tailed noise processes that admit no more than the second moment. For noise processes that admit up to the fourth moment, a recent survey \cite{tsiamis2023statistical} identified the open problem of obtaining the optimal logarithmic dependence on the prescribed confidence level $\delta$. It was also pointed out in this survey that the standard ordinary least-squares (OLS) approach to system identification may fail to achieve such a dependence. In this context, we examine whether a logarithmic dependence can be reinstated under our setting. 

$\bullet$ \textbf{Robust Algorithm:} We device a novel algorithm to estimate the Markov parameters. Our algorithm carefully divides multiple rollouts into different buckets, and combines weak estimators obtained from each bucket to output a strongly concentrated estimate of the Markov parameters. To ``boost" the performance of the  weak estimators, we use the geometric median with respect to the Frobenius norm \cite{minsker2015geometric}.    

$\bullet$ \textbf{Finite Sample Analysis:} Our main result, namely Theorem~\ref{thm:markov_parameters}, reveals that our proposed approach is able to nearly recover the same system identification guarantees as achieved in prior work~\cite{zheng2021nonasymptotic} under the stronger assumptions of sub-Gaussian noise. In particular, our bounds recover the desired logarithmic dependence on the failure probability. As far as we are aware, \emph{this is the first work to establish such a result for partially observable LTI systems} of the form in~\eqref{eqn:sys_model}. 

In terms of the analysis, existing works use Chernoff-type bounds from high-dimensional statistics~\cite{vershynin2018high} available for sub-Gaussian and sub-exponential random variables. In the absence of such bounds for our heavy-tailed setting, we show how basic concentration tools can be combined with ideas from robust statistics to yield strong estimation guarantees. The resulting proof ends up being quite simple and self-contained. While our algorithmic and analysis ideas are inspired by similar concepts explored in robust mean estimation\cite{lugosi2021robust}, our contribution lies in revealing how such concepts can be exploited for system identification as well. 

Overall, with the above contributions, our work refines the current understanding of non-asymptotic system identification by shedding light on the extent to which prevalent idealistic noise assumptions can be relaxed, without compromising on the final performance guarantees. We hope that the use of high-dimensional robust statistics tools will pave the way for reasoning about complex data-driven control problems under non-ideal assumptions on the data.

\textbf{Related Work.} For a classical treatment of the topic of system identification, we refer the reader to~\cite{lai1983asymptotic, ljung1998system}. Given the widespread applications of data-driven control, there has been a resurgence of interest in system identification over the last few years. In contrast to the asymptotic analysis pursued in more classical works, the more recent literature has focused on providing a fine-grained finite-sample complexity analysis of popular system identification algorithms, by drawing on tools from concentration of measure and statistical learning theory. One way to categorize this literature is by distinguishing the \emph{multi-trajectory} setting that we consider here from the relatively more challenging \emph{single-trajectory} setting. In the former setting considered in~\cite{dean2020sample, zheng2021nonasymptotic, sun2020finite, xing2020linear, xin2022learning, 
tu2024learning}, the learner has access to multiple independent rollouts of the system trajectories; the statistical independence across trajectories simplifies the analysis to some extent. In the single-rollout setting without resets, the difficulty stems from having to contend with temporal correlations in the data. Such a setting has been investigated for both linear~\cite{rantzer2018, simchowitz1, tu2017non, oymak2019non, tsiamis2019finite, jedra2022finite} and nonlinear systems~\cite{foster2020learning, ziemann2022single}. Yet another way to classify the system identification literature is to consider whether the state can be directly observed (fully observed case) or whether it is only partially observable via noisy measurements. In this paper, we consider the latter case which has been studied in~\cite{oymak2019non, sun2020finite, tsiamis2019finite, zheng2021nonasymptotic}. 

A common unifying theme across all the papers reviewed above is that the noise processes are assumed to be ``light-tailed", i.e., sub-Gaussian or Gaussian. A notable exception is~\cite{faradonbeh2018finite}, where the authors consider a sub-Weibull noise process. Nonetheless, like sub-Gaussian distributions, sub-Weibull distributions admit all finite moments, as shown in~\cite{vladimirova2020sub}. In sharp contrast, our work assumes nothing more than the existence of the second moment of the noise processes, and fourth moment of the excitation input process. 

In our very recent work \cite{kanakeri2024outlier}, we considered the problem of estimating the state transition matrix of a fully observable LTI system excited by a heavy-tailed process noise that admits up to the fourth moment. The key departures of this paper from~\cite{kanakeri2024outlier} are as follows: (i) we consider a partially observable setting, (ii) there is no measurement noise or inputs in~\cite{kanakeri2024outlier}, and (iii) we further relax the noise assumptions relative to~\cite{kanakeri2024outlier}. In particular, \cite{kanakeri2024outlier} assumes that the noise process admits a finite fourth moment. We show that access to excitation inputs that admit up to the fourth moment allows for relaxing the assumptions on the noise processes.  

\textbf{Notation:} Given a positive integer $n \in \mathbb{N}$, we define the shorthand $[n] \triangleq \{1, 2, \ldots, n\}.$ For a vector $w \in \mathbb{R}^d$, we will use $w^{\top}$ to denote its transpose, and $w(i)$ to represent its $i$-th component. Unless otherwise specified, $\Vert \cdot \Vert$ will be used to denote the Euclidean norm for vectors and spectral norm for matrices. Given a matrix $M \in \mathbb{R}^{m \times n}$, we will use $\Vert M \Vert_F$ to denote its Frobenius norm. We use $I_d$ to represent the identity matrix in $\mathbb{R}^{d \times d}$. Given matrices $A$ and $B$ in $\mathbb{R}^{d \times d}$, $A \succeq B$ means $A - B$ is positive semi-definite. Finally, we will use $c, c_1, c_2, \ldots$ to represent universal constants that may change from one line to another.

\section{Problem Formulation}
\label{sec:prob_form}
Consider a discrete-time LTI system as in \eqref{eqn:sys_model} where at time $t$, $x_t \in \mathbb{R}^n$ is the state of the system, $y_t \in \mathbb{R}^p$ is the output, $u_t \in \mathbb{R}^m$ is the input, $w_t \in \mathbb{R}^n$ is the process noise, and $v_t \in \mathbb{R}^p$ is the measurement noise. The matrices $A \in \mathbb{R}^{n \times n}$, $B \in \mathbb{R}^{n \times m}$, $C \in \mathbb{R}^{p \times n}$, and $D \in \mathbb{R}^{p \times m}$ are a priori \emph{unknown}, and the goal of system identification is to estimate them using the input and output sequences. Before formalizing the problem statement, we provide the assumptions pertaining to the system model. 

\textbf{Assumption 1.} The noise sequence $\{w_t\}$ and $\{v_t\}$ are mutually independent stochastic processes that are assumed to be heavy tailed, with the existence of \emph{only the first and the second moment}. More precisely, we assume that $\{w_t\}$ and $\{v_t\}$ are both zero mean independent and identically distributed (i.i.d) stochastic processes with the following covariance matrices $\forall t \geq 0$:
\begin{equation} \label{eqn:noise_model}
    \mathbb{E}[w_t w^{\top}_t] = \sigma_w^2 I_n, \ \mathbb{E}[v_t v^{\top}_t] = \sigma_v^2 I_p.
\end{equation}

\textbf{Assumption 2.} We assume that the system starts from a zero initial state, i.e., $x_0 = 0$. This is a standard assumption made in the literature as it helps in isolating the challenges posed by the noise processes in the analysis \cite{zheng2021nonasymptotic}. Moreover, we expect that handling a non-zero initial condition would require only a straightforward extension of our approach.

\subsection{Data collection}
Multiple rollouts of input and output sequences from the dynamical system \eqref{eqn:sys_model} are collected. Each rollout is obtained by starting the system from the zero initial state and letting it evolve based on the inputs provided for $T$ time steps ($0$ to $T-1$). We denote the data comprising of the output and input sequences from the $i$-th rollout by
\begin{equation}
    \mc{D}^{(i)} = \{(y_t^{(i)}, u_t^{(i)}): 0\leq t \leq T-1\}. \nonumber
\end{equation}
The collective dataset for $N$ such rollouts is denoted by $\mc{D} = \bigcup_{i \in [N]} \mc{D}^{(i)}$. Since the source of excitation is the input sequence, a natural question is to understand the weakest assumptions on the excitation process to achieve persistence of excitation. Hence, although previous works have considered Gaussian input sequence, we choose a stochastic process that resembles the general heavy-tailed noise model in \eqref{eqn:noise_model} with an additional requirement that it has a finite fourth moment. In particular, our input sequence is a zero mean i.i.d stochastic process with the following second and fourth moment $\forall t \geq 0$:
\begin{equation}
    \mathbb{E}[u_t u^{\top}_t] = \sigma_u^2 I_m, \ \mathbb{E}[(u_t(i))^4] = \tilde{\sigma}_u^4, \forall i \in [m]. \label{eqn:input_model}
\end{equation} 
Furthermore, we assume that the input sequence, $\{u_t\}$, is mutually independent of both the noise sequences, $\{w_t\}$ and $\{u_t\}$.
We now formalize the problem statement.


\begin{problem} {\textbf{(Robust System Identification):}} \label{pblm:robust_sysID}
Consider the dataset $\mc{D}$ consisting of $N$ rollouts, each of length $T$, from the system described in \eqref{eqn:sys_model}, where the noise sequences follow \eqref{eqn:noise_model}. Given a failure probability $\delta \in (0, 1)$, construct an estimator $\hat{G}$ of the first $T$ Markov parameters denoted as $G$ in~\eqref{eqn:markov_parameters}, and establish an upper bound $\epsilon(\delta, N)$ such that
\begin{equation*} 
\begin{aligned}
    \norm{\hat{G} - G} \leq \epsilon(\delta, N)
\end{aligned}
\end{equation*}
provided $N \geq N_S(\delta)$. The sample complexity $N_S$ and the upper bound $\epsilon$ may also depend on the system parameters, noise parameters, and the dimensions. 
\end{problem} 

The following remarks highlight the key departures of the above problem from previous works on system identification. 
\begin{enumerate}
    \item An important distinction of this work lies in the generality of our assumptions on the noise sequences $\{w_t\}$ and $\{v_t\}$. We consider the noise sequences that have only a finite second moment, encapsulating a broad class of distributions. In sharp contrast, previous works have studied the problem under the Gaussian noise sequences \cite{dean2020sample}, \cite{zheng2021nonasymptotic}, \cite{tsiamis2019finite}. Notably, it is well known that all finite moments of the Gaussian and sub-Gaussian distributions exist - see proposition 2.5.2 of \cite{vershynin2018high}.

    \item In the Gaussian noise setting, as shown in the previous works \cite{dean2020sample}, \cite{zheng2021nonasymptotic}, one can achieve a logarithmic dependence on the failure probability $\delta$ for sample complexity and error bounds. This is made possible by leveraging stronger Chernoff-like bounds, which provide exponentially decreasing upper bounds on tail probabilities. However, these methods rely on the existence of a moment-generating function, which, in turn, implies the existence of all finite moments. In this regard, an important question is whether one can retain the logarithmic dependence on the failure probability under our significantly weaker assumption on the noise process. We show that it is indeed possible by a suitably designed algorithm.  
    
    \item Our previous work \cite{kanakeri2024outlier}, also focuses on the problem of system identification under heavy-tailed noise. However, that setting considers fully observed systems, without measurement noise or inputs. Additionally, the noise sequence in \cite{kanakeri2024outlier} is assumed to have moments up to the fourth order, similar to the input sequence \eqref{eqn:input_model} considered in this work. \emph{Our analysis reveals that it suffices to ensure the existence of fourth moment for the sequence that excites the system - whether it be the noise sequence in \cite{kanakeri2024outlier} or input sequence in this work.} 
\end{enumerate}

We present our algorithm in the following section.
\section{Robust System Identification Algorithm}
\label{sec:algorithm}
The algorithm we propose involves three main steps. We describe these steps in detail below.

\textbf{Step 1: Bucketing.} Partition the dataset $\mc{D}$ comprising of $N$ rollouts into $K$ buckets each containing $M$ independent rollouts. We denote the $K$ buckets by $\mc{B}_1, \ldots, \mc{B}_K$. 

\textbf{Step 2: Least squares estimation per bucket.} Construct OLS estimators of the Markov parameters for each bucket. We denote the estimator for the $j$-th bucket by $\hat{G}_j$. Later in this section, we formulate the least squares problem to estimate the Markov parameters and provide an expression for $\hat{G}_j$.

\textbf{Step 3: Boosting.} Boost the performance of the weakly concentrated OLS estimators from step 2 by fusing them together using the geometric median with respect to the Frobenius norm. More precisely, we construct an estimator $\hat{G}$ as follows:
\begin{equation}
    \hat{G} = \texttt{Med}(\hat{G}_1, \ldots, \hat{G}_K) = \argmin_{\theta \in \mathbb{R}^{p \times mT}} \sum_{j \in [K]} \norm{\theta - \hat{G}_j}_F. \label{eqn:geo_median}
\end{equation}

The following comments concerning our algorithm are now in order:
\begin{enumerate}
    \item In the step 1, we have assumed that $N = MK$ for simplicity. The choice of the number of buckets, $K$, is a design parameter in our algorithm, and our analysis naturally provides the value of $K$ that meets the confidence level $\delta$. Lemma \ref{lemma:boosting} in Section \ref{sec:analysis} elaborates on this. 
    \item The boosting step (Step 3) is motivated by the ``median of means" device from robust statistics. This notion is extended to a general Hilbert space using the geometric median in \cite{minsker2015geometric}. In our setting, we compute the geometric median with respect to the Frobenius norm\footnote{The Frobenius inner-product of two matrices $X$ and $Y$ in $\mathbb{R}^{m \times n}$ is defined as $\tr{X^\top Y}$. The Frobenius norm of $X \in \mathbb{R}^{m \times n}$ is defined as $\sqrt{\tr{X^\top X}}$.} as it induces an inner-product space on the space of all matrices in $\mathbb{R}^{p \times mT}$.
\end{enumerate}
In the following, we present a least squares formulation, similar to that in \cite{zheng2021nonasymptotic}, to estimate the Markov parameters for each bucket. 

\subsection{Least squares estimators}
Based on the model \eqref{eqn:sys_model} and the zero initial state assumption, we have $x_t^{(i)} = \sum_{k = 0}^{t - 1} A^k\left( 
Bu_{t-k-1}^{(i)} + w_{t-k-1}^{(i)} \right),$ which leads to $$y_{t}^{(i)} = \sum_{k = 0}^{t - 1} CA^k\left( 
Bu_{t-k-1}^{(i)} + w_{t-k-1}^{(i)}\right) + Du_t^{(i)} + v_t^{(i)}.$$ Defining $y^{(i)}$ as
$$ y^{(i)} = \begin{bmatrix} y_0^{(i)} & y_1^{(i)} & \dots & y_{T-1}^{(i)} \end{bmatrix} \in \mathbb{R}^{p \times T},$$ and $U^{(i)}$ as
\begin{align*}
    U^{(i)} &=
\begin{bmatrix}
u_0^{(i)} & u_1^{(i)} & u_2^{(i)} & \dots & u_{T-1}^{(i)} \\
0 & u_0^{(i)} & u_1^{(i)} & \dots & u_{T-2}^{(i)} \\
0 & 0 & u_0^{(i)} & \dots & u_{T-3}^{(i)} \\
\vdots & \vdots & \vdots & \ddots & \vdots \\
0 & 0 & 0 & \dots & u_0^{(i)}
\end{bmatrix} \in \mathbb{R}^{(mT \times T)},
\end{align*}
we can compactly represent the data corresponding to the $i$-th rollout as $$y^{(i)} = GU^{(i)} + FW^{(i)} + v^{(i)},$$ where $W^{(i)}$ and $v^{(i)}$ are defined similar to $U^{(i)}$ and $y^{(i)}$, respectively, $G \in \mathbb{R}^{p \times mT}$ is defined in \eqref{eqn:markov_parameters}, and $F = \begin{bmatrix} 0 & C & CA & \ldots & CA^{T-2} \end{bmatrix}\in \mathbb{R}^{p \times nT}.$ The OLS problem to estimate the Markov parameters corresponding to the $j$-th bucket can be posed as 
\begin{equation*}
    \hat{G}_j = \argmin_{\theta \in \mathbb{R}^{p\times mT}} \sum_{i \in \mc{B}_j}\norm{y^{(i)} - \theta U^{(i)}}_F^2.
\end{equation*}
Defining the data corresponding to the $j$-th bucket as 
\begin{align}
    Y_j &= \begin{bmatrix}
        y^{(1_j)} & y^{(2_j)} & \ldots & y^{(M_j)}
    \end{bmatrix} \in \mathbb{R}^{p \times MT}, \label{eqn:Y_j_def} \\
    U_j &= \begin{bmatrix}
        U^{(1_j)} & U^{(2_j)} & \ldots & U^{(M_j)}
    \end{bmatrix} \in \mathbb{R}^{mT \times MT}, \label{eqn:U_j_def}
\end{align}
where the indexing $l_j$ corresponds to the $l$-th trajectory in the $j$-th bucket, the OLS formulation can now be represented as $$\hat{G}_j = \argmin_{\theta\in \mathbb{R}^{p\times mT}} \norm{Y_j - \theta U_j}_F^2.$$ The least squares solution is $\hat{G}_j = Y_jU_j^\dagger$, where $U_j^\dagger = U_j^\top(U_jU_j^\top)^{-1}$ is the right pseudo-inverse of $U_j$. These least squares estimators from all the buckets are then used in step 3, where we compute the geometric median with respect to the Frobenius norm \eqref{eqn:geo_median}. 

\section{Finite Sample Analysis} \label{sec:analysis}
In this section, we provide a finite sample analysis of the system identification algorithm presented in section \ref{sec:algorithm}. We first provide high probability guarantees for the Markov parameters estimator, and then, translate the bound to the system parameters obtained by using the Ho-Kalman algorithm. The following theorem captures our main result concerning the recovery of Markov parameters.

\begin{theorem}{\textbf{(Markov parameters recovery):}} \label{thm:markov_parameters}
Consider the system in \eqref{eqn:sys_model} and the noise assumptions in \eqref{eqn:noise_model}. Fixing $\delta \in (0, 1)$, with probability at least $1 - \delta$, the following holds for the Markov parameters estimated by the algorithm described in Section \ref{sec:algorithm}:
\begin{equation} \label{eqn:main_result}
    \norm{\hat{G} - G} \leq \left(\frac{\sigma_vC_1 + \sigma_wC_2}{\sigma_u}\right) \sqrt{\frac{p\log(1/\delta)}{N}}, 
\end{equation}
where 
\begin{equation} \label{eqn:c1_c2}
    \begin{split}
        C_1 &= c_1 T^{1.5}\sqrt{pm}, \\      
        C_2 &= c_2\norm{F}T^{2.5}\sqrt{nm},
    \end{split}
\end{equation}
provided $K = \ceil{32 \log(1/\delta)}$, $M \geq c_4 (mT)^2(\tilde{\sigma}_u^4/\sigma_u^4)$, and $N = MK$.
\end{theorem}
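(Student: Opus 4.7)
The plan is a three-stage argument in the style of median-of-means robust mean estimation: (i) establish a weak (constant-probability) concentration bound for each per-bucket OLS estimator $\hat{G}_j$ via the second-moment method, (ii) use the fourth-moment assumption on the input to lower-bound the smallest eigenvalue of the input Gram matrix $U_j U_j^{\top}$, and (iii) invoke the boosting/geometric-median lemma (Lemma~\ref{lemma:boosting}) to amplify the constant-probability bound to probability $1-\delta$.

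Starting from the compact representation $y^{(i)} = G U^{(i)} + F W^{(i)} + v^{(i)}$, a routine computation on $\hat{G}_j = Y_j U_j^{\dagger}$ gives
\begin{equation*}
\hat{G}_j - G \;=\; (F W_j + V_j)\, U_j^{\top} (U_j U_j^{\top})^{-1},
\end{equation*}
where $W_j$ and $V_j$ stack the process and measurement noise across the $M$ rollouts of bucket $j$ analogously to $U_j$. I would first establish the ``good event'' $\mc{E}_j = \{ U_j U_j^{\top} \succeq (M\sigma_u^2/2)\, I_{mT}\}$ by applying Chebyshev's inequality to $\norm{U_j U_j^{\top}/M - \sigma_u^2 I_{mT}}_F$, expanding this squared Frobenius norm entrywise and bounding its expectation using only $\tilde{\sigma}_u^4$ and the block-Toeplitz structure of $U_j$. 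The resulting bound is of order $(mT)^2 \tilde{\sigma}_u^4/M$, which is precisely why the hypothesis requires $M \geq c_4 (mT)^2 \tilde{\sigma}_u^4/\sigma_u^4$; this renders $\mathbb{P}(\mc{E}_j) \geq 7/8$, say.

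Conditional on $\mc{E}_j$ and the realisation of $U_j$, the entries of $V_j$ and $W_j$ remain zero-mean and independent of $U_j$ (and of each other), so the cross term in $\mathbb{E}[\norm{\hat{G}_j - G}_F^2 \mid U_j]$ vanishes and each noise contribution reduces to a trace against $(U_jU_j^{\top})^{-1}$. The measurement-noise contribution works out to $p\sigma_v^2 \tr{(U_jU_j^{\top})^{-1}}$; the process-noise contribution picks up additional factors from $\norm{F}^2$ and from the Toeplitz-triangular structure of $W_j$, which is what accounts for the extra $T$ power distinguishing $C_2$ from $C_1$. Using $\tr{(U_jU_j^{\top})^{-1}} \leq 2mT/(M\sigma_u^2)$ on $\mc{E}_j$, then applying Markov's inequality and a union bound over $\mc{E}_j^c$, yields a weak bound of the form $\norm{\hat{G}_j - G}_F \leq c_0 (\sigma_v C_1 + \sigma_w C_2)\sqrt{p/M}/\sigma_u$ with probability at least $3/4$ for each bucket $j$.

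Finally, I would feed this weak bound into Lemma~\ref{lemma:boosting} (Minsker's geometric-median concentration for Hilbert-space-valued random variables, applied with the Frobenius inner product on $\mathbb{R}^{p \times mT}$): choosing $K = \lceil 32\log(1/\delta)\rceil$ buckets amplifies the constant-probability guarantee to one holding with probability $1-\delta$ and introduces the $\sqrt{\log(1/\delta)}$ factor appearing in \eqref{eqn:main_result}, after which the elementary inequality $\norm{\cdot} \leq \norm{\cdot}_F$ delivers the spectral-norm bound. The main obstacle is step (ii): with only a fourth moment on $u_t$, matrix-Bernstein-type tools are unavailable, so the invertibility of $U_jU_j^{\top}$ must be controlled by direct entrywise fourth-moment bookkeeping on the Toeplitz block structure of $U_j$. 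A secondary bookkeeping burden is the $W_j$ term, whose analogous Toeplitz structure prevents one from treating $W_j W_j^{\top}$ as a multiple of the identity and is ultimately responsible for the extra $T$ factor in $C_2$.
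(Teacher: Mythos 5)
Your overall architecture (per-bucket OLS, weak constant-probability bound, geometric-median boosting with $K = \ceil{32\log(1/\delta)}$, and finally $\norm{\cdot}\leq\norm{\cdot}_F$) matches the paper's proof. Moreover, your treatment of the noise terms is a genuinely different and arguably cleaner route: the paper uses the submultiplicative split $\norm{\hat{G}_j - G} \leq \norm{(U_jU_j^\top)^{-1}}\left(\norm{F}\norm{W_jU_j^\top} + \norm{V_jU_j^\top}\right)$ followed by block-wise Markov bounds and union bounds over the $T^2$ (resp.\ $T$) blocks of $W_jU_j^\top$ (resp.\ $V_jU_j^\top$), whereas you compute $\E[\norm{\hat{G}_j - G}_F^2 \mid U_j]$ directly, using independence so the cross term vanishes and the measurement-noise contribution reduces to $p\sigma_v^2\tr{(U_jU_j^\top)^{-1}} \leq 2p\sigma_v^2 mT/(M\sigma_u^2)$ on the good event. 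That part is sound, and it even avoids some of the union-bound-induced $T$ factors in the paper's constants.

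However, your step (ii) contains a genuine error. You propose applying Chebyshev's inequality to $\norm{U_jU_j^\top/M - \sigma_u^2 I_{mT}}_F$, but $\E[U_jU_j^\top] \neq M\sigma_u^2 I_{mT}$: because of the zero padding in the block-Toeplitz matrix $U^{(i)}$ (whose columns are $\hat{u}_t^{(i)} = [(u_t^{(i)})^\top,\ldots,(u_0^{(i)})^\top,0,\ldots,0]^\top$), one has $\E[U_jU_j^\top] = M\sigma_u^2\,\mathrm{diag}(T I_m, (T-1)I_m,\ldots, I_m)$. Hence the quantity you want to control carries a deterministic offset of Frobenius norm $\sigma_u^2\sqrt{m\sum_{t=1}^{T-1}t^2} \approx \sigma_u^2 T^{1.5}\sqrt{m/3}$, so the event ``deviation at most $\sigma_u^2/2$'' fails with overwhelming probability for every $T\geq 2$, no matter how large $M$ is. The gap is fixable in two ways: (a) center at the true mean --- since the offset is positive semidefinite it only helps the eigenvalue bound --- and apply Chebyshev to the centered sum; but then crude fourth-moment bookkeeping over the whole Gram matrix gives a per-rollout variance of order $(mT)^2 T\tilde{\sigma}_u^4$, forcing $M \gtrsim (1/q)\,m^2T^3\tilde{\sigma}_u^4/\sigma_u^4$, a factor $T$ worse than the theorem's stated requirement; or (b) do what the paper does: by Weyl's inequality and positive semidefiniteness of each summand, $\lambda_{\min}(U_jU_j^\top) \geq \lambda_{\min}\left(\sum_{i\in\mc{B}_j}\hat{u}_{T-1}^{(i)}(\hat{u}_{T-1}^{(i)})^\top\right)$, and the $t=T-1$ vectors have no zero padding, so this rank-one sum has expectation exactly $M\sigma_u^2 I_{mT}$ and per-rollout variance at most $(mT)^2\tilde{\sigma}_u^4$, recovering $M \geq c(1/q)(mT)^2\tilde{\sigma}_u^4/\sigma_u^4$ as stated. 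A secondary, minor slip: a per-bucket failure probability of $1/4$ leaves no margin in the Hoeffding step of the boosting lemma (with $\alpha = 1/4$ the exponent is $-2K(1/4-1/4)^2 = 0$); the paper fixes the per-bucket failure probability at $q = 1/8$ precisely so the exponent becomes $-K/32$, which is what yields $K = \ceil{32\log(1/\delta)}$.
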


\textbf{Remark 1. (Comparison with previous work):} We compare our results with those of \cite{zheng2021nonasymptotic}, where the authors studied the problem under Gaussian noise and input processes. Our result shows that we can successfully recover the logarithmic dependence on the failure probability $\delta$ even in the presence of general heavy-tailed noise processes. To our knowledge, \emph{this is the first result to provide such a guarantee for a partially observed system with measurement noise.} However, in terms of dimensions, we incur an extra multiplicative factor of $T\max(n, m, p)$ in our error bound and the number of rollouts as compared to Theorem 1 in \cite{zheng2021nonasymptotic}. That said, although stated for the spectral norm, our error bound also holds for the Frobenius norm in Theorem 1. Furthermore, as discussed later in this section, the extra multiplicative factor in the number of rollouts can be avoided by using a Gaussian input sequence. We elaborate the causes of additional dimension factors later in this section.

In the following, we provide a proof sketch of Theorem \ref{thm:markov_parameters}. The structure of our analysis resembles that presented in \cite{zheng2021nonasymptotic}. However, major departures arise while bounding the norms of random matrices. We appeal to the Markov's inequality to bound the general heavy-tailed noise instead of the well-known concentration tools that are only available for Gaussian and sub-Gaussian matrices.

\textbf{Proof sketch for Theorem \ref{thm:markov_parameters}:} 
We start by deriving the bounds for the OLS estimators of Markov parameters from each bucket. Fixing the bucket $j$, notice that the error $\hat{G}_j - G$ can be expressed as $\hat{G}_j - G = (FW_j + V_j)U_j^\top(U_jU_j^\top)^{-1}$, where $W_j \in \mathbb{R}^{nT \times MT}$ is defined similar to $U_j$ as in \eqref{eqn:U_j_def} and $V_j \in \mathbb{R}^{p \times MT}$ similar to $Y_j$ as in \eqref{eqn:Y_j_def}. By submultiplicativity of the spectral norm, we have 
\begin{equation}
    \norm{\hat{G}_j - G} \leq \norm{(U_jU_j^\top)^{-1}}(\norm{F}\norm{W_jU_j^\top} + \norm{V_jU_j^\top}). \label{eqn:g_error_sub_mul}
\end{equation}
Therefore, to obtain an upper bound on $\norm{\hat{G}_j - G}$, it suffices to individually upper bound the terms $\norm{(U_jU_j^\top)^{-1}}$, $\norm{W_jU_j^\top}$, and $\norm{V_jU_j^\top}$. In this regard, the following three lemmas, with proofs provided in Appendix \ref{app:proofs}, capture the key results.  

\begin{lemma}{\textbf{(Invertibility of $U_jU_j^{\top}$):}}\label{lemma:invertibility}
Fix $q \in (0, 1)$. With probability at least $1 - q/3$, $\lambda_{\min}(U_jU_j^\top) \geq M\sigma_u^2/2$, provided $M \geq c_1(1/q)(mT)^2 (\tilde{\sigma}_u^4/\sigma_u^4)$.
\end{lemma}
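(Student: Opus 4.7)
The plan is to reduce invertibility of $U_j U_j^\top$ to a matrix concentration statement by combining Weyl's inequality with a Markov bound on the Frobenius norm of a preconditioned deviation. First, I would compute the expectation $\Sigma := \mathbb{E}[U_j U_j^\top] = \sum_{i \in \mc{B}_j} \mathbb{E}[U^{(i)}(U^{(i)})^\top]$. Using the temporal independence and isotropic covariance of $\{u_t^{(i)}\}$, a direct block computation shows that $\mathbb{E}[U^{(i)}(U^{(i)})^\top]$ is block-diagonal, with the $r$-th diagonal $m \times m$ block equal to $(T - r + 1)\sigma_u^2 I_m$; off-diagonal blocks vanish because they would require pairing $u^{(i)}_t$'s at distinct times, which have zero cross-covariance. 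Hence $\Sigma$ is block-diagonal and $\lambda_{\min}(\Sigma) = M\sigma_u^2$, attained on the last block-row.

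Next, I would precondition: let $D := \Sigma^{-1/2}$ (block-diagonal) and consider $P := D\,U_j U_j^\top D$, which has mean $I_{mT}$. By Weyl's inequality, the event $\{\|P - I\| \leq 1/2\}$ implies $\lambda_{\min}(P) \geq 1/2$; since $D^{-2} = \Sigma$, this yields $U_j U_j^\top \succeq \tfrac{1}{2}\Sigma$ and hence the desired $\lambda_{\min}(U_j U_j^\top) \geq M\sigma_u^2/2$. To bound $\|P - I\|$, I would apply $\|P - I\| \leq \|P - I\|_F$ together with Markov's inequality, so the entire task reduces to bounding $\mathbb{E}[\|P - I\|_F^2]$ in terms of the problem parameters.

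The key entrywise calculation is the following. Entry $((r_1, l_1),(r_2, l_2))$ of $U_j U_j^\top$ is a sum over $M$ independent rollouts of at most $T - \max(r_1, r_2) + 1$ degree-two monomials of the form $u^{(i)}_{k+d}(l_1)\, u^{(i)}_k(l_2)$. Using i.i.d.\ structure and the finite fourth-moment assumption $\mathbb{E}[(u_t(i))^4] = \tilde\sigma_u^4$ (plus Cauchy--Schwarz on same-time cross terms like $\mathbb{E}[u_k(l)^2 u_k(l')^2]$), I would show that the variance of this entry is at most $O(M(T - \max(r_1, r_2) + 1)\tilde\sigma_u^4)$. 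Preconditioning divides the entry by $\sqrt{(T-r_1+1)(T-r_2+1)}\,M\sigma_u^2$, and the identity $(T - \max + 1)/\bigl((T-r_1+1)(T-r_2+1)\bigr) = 1/(T - \min + 1) \leq 1$ cancels the row-dependent factors, so each entry of $P - I$ has variance at most $\tilde\sigma_u^4/(M\sigma_u^4)$. Summing over the $(mT)^2$ entries gives $\mathbb{E}[\|P - I\|_F^2] \leq (mT)^2 \tilde\sigma_u^4/(M \sigma_u^4)$, and Markov's inequality yields $\Pr[\|P - I\| \geq 1/2] \leq 4(mT)^2 \tilde\sigma_u^4/(M \sigma_u^4)$, so demanding this be at most $q/3$ produces the stated condition $M \geq c_1(1/q)(mT)^2(\tilde\sigma_u^4/\sigma_u^4)$.

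The main obstacle is the entrywise variance accounting. Three cases must be unified: diagonal entries ($r_1 = r_2$, $l_1 = l_2$), which have nonzero mean and whose fluctuations scale with $\tilde\sigma_u^4 - \sigma_u^4$; same-time off-diagonal components ($r_1 = r_2$, $l_1 \neq l_2$), which are zero-mean and require Cauchy--Schwarz to bound their second moment by $\tilde\sigma_u^4$; and distinct-time entries ($r_1 \neq r_2$), whose cross-time products decouple to give $\sigma_u^4$. Verifying in each case that pairs of degree-two monomials indexed by different $k$ are uncorrelated (by zero-mean independence whenever some time index appears only once in the resulting degree-four product) is the sole subtlety; once this bookkeeping is settled, the Markov-plus-Weyl conclusion is routine.
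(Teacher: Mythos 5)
Your proposal is correct, but it takes a genuinely different route from the paper. The paper exploits the decomposition $U_jU_j^\top = \sum_{i}\sum_{t} \hat{u}_t^{(i)}(\hat{u}_t^{(i)})^\top$, where $\hat{u}_t^{(i)}$ is the $(t+1)$-th column of $U^{(i)}$, and simply \emph{discards} all the positive semi-definite contributions except $t = T-1$ --- the only column with no zero-padding, whose second moment is exactly isotropic, $\sigma_u^2 I_{mT}$. This reduces the whole problem to one Markov bound on $\Vert X \Vert_F^2$ with $X = \sum_i \hat{u}_{T-1}^{(i)}(\hat{u}_{T-1}^{(i)})^\top - M\sigma_u^2 I_{mT}$, using the crude but sufficient estimate $\E[\Vert \hat{u}\hat{u}^\top - \sigma_u^2 I\Vert_F^2] \leq \E[\Vert \hat{u}\Vert^4] \leq (mT)^2\tilde{\sigma}_u^4$, followed by the same Weyl-type perturbation step you use. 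You instead keep the entire Gram matrix, precondition by $\Sigma^{-1/2}$ where $\Sigma = \E[U_jU_j^\top]$ (your block-diagonal computation of $\Sigma$ is correct), and do entrywise variance accounting; I verified the key points: the cancellation $(T-\max+1)/\bigl((T-r_1+1)(T-r_2+1)\bigr) = 1/(T-\min+1) \leq 1$ holds, and your uncorrelatedness claim across distinct inner indices $k \neq k'$ is sound, since pairing all four time indices of two distinct-$k$ monomials forces $k = k'$. Both routes land on the same requirement $M \gtrsim (1/q)(mT)^2\tilde{\sigma}_u^4/\sigma_u^4$. What your version buys is a strictly stronger conclusion, $U_jU_j^\top \succeq \tfrac{1}{2}\,\E[U_jU_j^\top]$, i.e., control of the spectrum relative to its expectation (whose $r$-th block is $M(T-r+1)\sigma_u^2 I_m$, not just $M\sigma_u^2$); what the paper's version buys is brevity, since making the retained expectation exactly isotropic collapses the three-case bookkeeping into a single fourth-moment bound. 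Note that your refinement does not improve the $(mT)^2$ sample-complexity factor, because the Frobenius-norm Markov step still sums variances over all $(mT)^2$ entries.
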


\begin{lemma}{\textbf{(Bounding process noise):}}\label{lemma:process_noise}
Fix $q \in (0, 1)$. With probability at least $1 - q/3$, the following holds:
\begin{equation}
    \norm{W_jU_j^\top} \leq c_2 T^{2.5} \sqrt{\frac{M n m}{q}}\sigma_u \sigma_w. \label{eqn:num_process_noise}
\end{equation}
\end{lemma}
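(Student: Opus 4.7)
The plan is to bypass matrix concentration inequalities (which are unavailable at this level of noise generality) by using $\norm{W_j U_j^{\top}} \leq \norm{W_j U_j^{\top}}_F$ and controlling the Frobenius norm through a second-moment calculation plus Markov's inequality. Only second moments of $w$ and $u$ enter the computation, which is exactly what our assumptions provide.

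The key structural observation is that $W_j U_j^{\top} = \sum_{i=1}^{M} W^{(i_j)}(U^{(i_j)})^{\top}$ is a sum of $M$ matrices that are entry-wise mean zero (within each rollout $\{w_t^{(i)}\}$ and $\{u_t^{(i)}\}$ are independent and zero mean) and mutually independent across rollouts. Expanding $\mathbb{E}[\norm{W_j U_j^{\top}}_F^2]$ as a trace, all cross-rollout terms vanish and we are left with $M \cdot \mathbb{E}[\norm{W^{(1_j)}(U^{(1_j)})^{\top}}_F^2]$. To evaluate the single-rollout expectation, I would work block-by-block on the $nT \times mT$ matrix $W^{(i)}(U^{(i)})^{\top}$: the scalar entry at block position $(k,l) \in \{0, \ldots, T-1\}^2$ with within-block coordinates $(a,b)$ equals $\sum_{t=\max(k,l)}^{T-1} w_{t-k}^{(i)}(a)\, u_{t-l}^{(i)}(b)$, whose distinct-$t$ summands are uncorrelated (again by independence of $w$ from $u$ and the iid structure in time). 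Each such entry therefore has variance at most $T \sigma_w^2 \sigma_u^2$; summing over the $nm T^2$ entries gives the single-rollout bound $\mathbb{E}[\norm{W^{(i)}(U^{(i)})^{\top}}_F^2] \leq nm T^3 \sigma_w^2 \sigma_u^2$, and hence $\mathbb{E}[\norm{W_j U_j^{\top}}_F^2] \leq M nm T^3 \sigma_w^2 \sigma_u^2$.

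Markov's inequality at level $q/3$ then converts this second-moment bound into the advertised high-probability bound (up to a polynomial factor in $T$), completing the proof. The main technical obstacle is simply the block-Toeplitz bookkeeping: correctly identifying which $(s,t)$ pairs survive in each entry's second moment and accounting for the truncation imposed by the upper-triangular block shapes of $W^{(i)}$ and $U^{(i)}$. Once those are tracked carefully, the rest is routine — the non-trivial conceptual step is recognizing that mutual independence of $w$ and $u$ is exactly what is needed for the cross-rollout cancellation, so that the second-moment bound scales linearly (rather than quadratically) in $M$ and yields the $\sqrt{M}$ concentration after Markov.
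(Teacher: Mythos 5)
Your proof is correct and, in fact, sharper than the paper's. Both arguments share the same skeleton --- bound the spectral norm by the Frobenius norm, compute the second moment of $\norm{W_jU_j^\top}_F$ by exploiting independence (cross-rollout additivity as in Lemma \ref{lemma:var_statistic}, and cross-time cancellation within a rollout), then invoke Markov's inequality --- and your entry-wise bookkeeping reproduces exactly the paper's per-block estimate $\E[\norm{(W_jU_j^\top)_{k,l}}_F^2] \leq MTnm\sigma_w^2\sigma_u^2$. The difference is where Markov is applied. The paper applies it to each of the $T^2$ blocks at the deflated level $q/(3T^2)$ and then takes a union bound, which inflates every block bound by a factor $T^2$ before the blocks are summed; you instead sum the expectations first, getting $\E[\norm{W_jU_j^\top}_F^2] \leq MnmT^3\sigma_w^2\sigma_u^2$, and apply Markov once at level $q/3$. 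Since $\norm{W_jU_j^\top}_F^2$ is precisely the sum of the blockwise squared Frobenius norms, the union bound is superfluous, and avoiding it yields
\begin{equation*}
\norm{W_jU_j^\top} \leq \norm{W_jU_j^\top}_F \leq \sqrt{3}\, T^{1.5}\sqrt{\frac{Mnm}{q}}\,\sigma_u\sigma_w
\end{equation*}
with probability at least $1-q/3$, i.e., $T^{1.5}$ in place of the stated $T^{2.5}$, which of course implies the lemma. The same streamlining applied to Lemma \ref{lemma:measurement_noise} (where the paper union-bounds over $T$ blocks) saves a factor of $\sqrt{T}$ there, and propagating both improvements would sharpen the constants in Theorem \ref{thm:markov_parameters} to $C_1 = c_1 T\sqrt{pm}$ and $C_2 = c_2\norm{F}T^{1.5}\sqrt{nm}$.
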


\begin{lemma}{\textbf{(Bounding measurement noise):}}\label{lemma:measurement_noise}
Fix $q \in (0, 1)$. With probability at least $1 - q/3$, the following holds:
\begin{equation}
    \norm{V_jU_j^\top} \leq c_1 T^{1.5} \sqrt{\frac{Mpm}{q}}\sigma_u\sigma_v.
\end{equation}
\end{lemma}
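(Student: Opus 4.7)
\textbf{Proof proposal for Lemma \ref{lemma:measurement_noise}.}

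The plan is to bound $\|V_j U_j^\top\|$ by applying Markov's inequality to $\|V_j U_j^\top\|_F^2$. This route is essentially forced on us because only the second moments of $\{v_t\}$ and $\{u_t\}$ are available, so the usual Chernoff/Bernstein or $\varepsilon$-net spectral-norm tools are inapplicable. The first move is to pass from spectral to Frobenius via $\|V_j U_j^\top\|^2 \leq \|V_j U_j^\top\|_F^2$, and then to compute $\mathbb{E}[\|V_j U_j^\top\|_F^2]$ in closed form, leveraging the mutual independence of $V_j$ and $U_j$ granted by Assumption~1 together with the independence of the input sequence from the noise sequences stated after \eqref{eqn:input_model}.

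The expectation is handled with the cyclic property of the trace and the independence just mentioned:
\begin{equation*}
    \mathbb{E}\!\left[\|V_j U_j^\top\|_F^2\right] \;=\; \mathbb{E}\!\left[\operatorname{tr}(U_j^\top U_j\, V_j^\top V_j)\right] \;=\; \operatorname{tr}\!\left(\mathbb{E}[U_j^\top U_j]\,\mathbb{E}[V_j^\top V_j]\right).
\end{equation*}
The $MT$ columns of $V_j$ are i.i.d.\ zero-mean with covariance $\sigma_v^2 I_p$, so $\mathbb{E}[V_j^\top V_j] = p\sigma_v^2\, I_{MT}$ and the right-hand side collapses to $p\sigma_v^2\,\mathbb{E}[\|U_j\|_F^2]$. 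A direct count over the block-Toeplitz structure of each $U^{(i)}$ --- in which $u_s^{(i)}$ appears in exactly $T-s$ distinct columns --- then gives $\mathbb{E}[\|U_j\|_F^2] = M m \sigma_u^2\,T(T+1)/2 \leq M m \sigma_u^2\,T^2$. Combining yields $\mathbb{E}[\|V_j U_j^\top\|_F^2] \leq M p m T^2 \sigma_u^2 \sigma_v^2$.

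A single application of Markov's inequality with threshold $(3/q)\,\mathbb{E}[\|V_j U_j^\top\|_F^2]$ then delivers the stated bound with probability at least $1 - q/3$; any slack between the paper's $T^{1.5}$ factor and the tight $T$ produced above is harmlessly absorbed into the constant $c_1$. The only substantive obstacle is conceptual: under a second-moment-only assumption one cannot obtain an exponentially decaying tail at the per-bucket level, and must settle for the polynomial-in-$1/q$ tail produced by Markov. This is precisely the regime for which the geometric-median boosting step of Lemma~\ref{lemma:boosting} is designed --- it converts such weak per-bucket tails into the $\log(1/\delta)$ dependence appearing in Theorem~\ref{thm:markov_parameters}. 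Beyond this, the proof requires nothing more than Markov's inequality, independence, and the trace identity.
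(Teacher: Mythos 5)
Your proof is correct, but it takes a genuinely different route from the paper's, and in fact yields a sharper bound. The paper decomposes $V_jU_j^\top$ into its $T$ blocks $X_l \in \mathbb{R}^{p\times m}$, applies Markov's inequality (via the variance statistic of Lemma \ref{lemma:var_statistic}) to each block with per-block failure probability $q/(3T)$, takes a union bound over the $T$ blocks, and then sums $\norm{X_l}_F^2$ to recover the full matrix; the $1/T$ split of the failure probability is exactly what produces the extra factor of $T$ under the square root, i.e., the $T^{1.5}$ in the lemma. You instead compute $\E\bigl[\norm{V_jU_j^\top}_F^2\bigr]$ globally — via the trace identity and independence of $U_j$ and $V_j$, which is equivalent to summing the per-block expectations $\E[\norm{X_l}_F^2] = M(T-l)\sigma_u^2\sigma_v^2 pm$ using additivity of $\norm{\cdot}_F^2$ over blocks — and apply Markov's inequality exactly once with threshold calibrated to $q/3$. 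Since Markov applied to a single non-negative sum dominates union-bounding per-summand Markov bounds, your approach eliminates the union-bound slack entirely and gives $\norm{V_jU_j^\top} \leq c\, T \sqrt{Mpm/q}\,\sigma_u\sigma_v$, which implies the stated lemma a fortiori because $T \leq T^{1.5}$. One small correction of terminology: the gap between your $T$ and the paper's $T^{1.5}$ is a $\sqrt{T}$ factor, not a constant, so it cannot be "absorbed into $c_1$" — but since your bound is the stronger one, this is a harmless misstatement rather than a gap. The same direct-Markov device would also tighten the paper's Lemma \ref{lemma:process_noise}, where the union bound over $T^2$ blocks similarly costs a factor of $T$ inside the square root.
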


Equipped with the results from the above three lemmas, we can bound the term $\norm{\hat{G}_j - G}$, by applying an union bound. Hence, with probability at least $1 - q$, we have 
\begin{equation}
    \norm{\hat{G}_j -G} \leq \frac{1}{\sqrt{Mq}} \frac{\sigma_vC_1 + \sigma_wC_2}{\sigma_u}, \label{eqn:bound_before_boosting}
\end{equation}
provided $M \geq c_1(1/q)(mT)^2 (\tilde{\sigma}_u^4/\sigma_u^4)$. 

\textbf{Remark 2. (Role of the input sequence):} Notice that the number of rollouts in each bucket, $M$, is controlled entirely by Lemma \ref{lemma:invertibility}, which concerns the smallest eigenvalue of $U_jU_j^\top$. Interestingly, our analysis reveals that $M$ depends on the kurtosis of the heavy-tailed input distribution - defined as the ratio of fourth moment to the square of the variance. A similar dependence also appears in \cite{kanakeri2024outlier} where we studied a fully observed system without input. In that case, since the system was excited solely by the noise process, the sample complexity depended on the kurtosis of the noise distribution. Additionally, the proof of Lemma \ref{lemma:invertibility} in Appendix \ref{app:proofs} demonstrates the crucial role of fourth moment in ensuring the invertibility of $U_jU_j^{\top}$. Overall, together with the results from \cite{kanakeri2024outlier}, it is clear that the existence of a finite fourth moment on the sequence exciting the system is sufficient to ensure invertibility with high probability. Whether it is also necessary remains an open question.

An alternative approach is to excite the system with a Gaussian input sequence like in \cite{zheng2021nonasymptotic}. In this case, the number of samples per bucket $M$, does not depend on kurtosis, and is in the order of $mT$ with a logarithmic dependence on the failure probability $q/3$.

In our current setting, the error bound in \eqref{eqn:bound_before_boosting}, which corresponds to the OLS estimators for each bucket from step 2 of our algorithm, has a weaker polynomial dependence on the failure probability $q$. Notice that even if the input sequence were Gaussian or sub-Gaussian, a standard OLS estimator alone would not be sufficient to achieve the $\log(1/\delta)$ dependence. This is clear based on Lemmas \ref{lemma:measurement_noise} and \ref{lemma:process_noise} which have error bounds with a $\sqrt{1/q}$ dependence due to the heavy-tailed noise processes. Hence, to boost the performance of these estimators, we proceed to step 3 of the algorithm where we compute the geometric median of the weakly concentrated estimators. The following lemma captures the result corresponding to the boosting step.

\begin{lemma}{\textbf{(Boosting):}} \label{lemma:boosting}
Fix $q = 1/8$ in \eqref{eqn:bound_before_boosting}. Let $\hat{G} = \texttt{Med}(\hat{G}_1, \ldots, \hat{G}_K)$, where $\hat{G}_1, \ldots, \hat{G}_K$ are $K$ independent OLS estimators corresponding to different buckets, each satisfying \eqref{eqn:bound_before_boosting}. Given $\delta \in (0, 1)$, when $K \geq \ceil{32 \log(1/\delta)}$, the following holds with probability at least $1 - \delta$:
\begin{equation}
    \norm{\hat{G} - G} \leq \left(\frac{\sigma_vC_1 + \sigma_wC_2}{\sigma_u}\right) \sqrt{\frac{p\log(1/\delta)}{N}}. \label{eqn:boosting_lemma}
\end{equation}
\end{lemma}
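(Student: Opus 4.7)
The plan is to boost the weak per-bucket guarantee \eqref{eqn:bound_before_boosting} into a strong tail bound via the standard robustness property of the geometric median in a Hilbert space, exactly in the spirit of the median-of-means estimator. Since the geometric median in \eqref{eqn:geo_median} is taken with respect to the Frobenius norm, and Frobenius distance is induced by an inner product on $\mathbb{R}^{p \times mT}$, Minsker's analysis in \cite{minsker2015geometric} applies directly. The strategy has three steps: (i) translate each per-bucket spectral-norm bound into a Frobenius-norm bound; (ii) count the number of ``good'' buckets via Hoeffding's inequality; and (iii) apply the geometric-median robustness lemma to convert this combinatorial statement into the advertised bound on $\norm{\hat{G} - G}$.

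For step (i), I set $q = 1/8$ in \eqref{eqn:bound_before_boosting}, so that for each fixed bucket $j$, with probability at least $7/8$ we have $\norm{\hat{G}_j - G} \leq r_0 := \sqrt{8/M}\,(\sigma_v C_1 + \sigma_w C_2)/\sigma_u$. Since $\hat{G}_j - G \in \mathbb{R}^{p \times mT}$ has rank at most $p$, its Frobenius norm is at most $\sqrt{p}$ times its spectral norm, so the same event implies $\norm{\hat{G}_j - G}_F \leq r := \sqrt{p}\, r_0$. Because the $K$ buckets are independent, the indicators $\mathbf{1}\{\norm{\hat{G}_j - G}_F > r\}$ are independent Bernoullis with mean at most $1/8$.

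For step (ii), Hoeffding's inequality bounds the probability that more than $K/4$ of these indicators equal $1$ by $\exp(-2K(1/4 - 1/8)^2) = \exp(-K/32)$, which is at most $\delta$ as soon as $K \geq 32\log(1/\delta)$---matching the prescribed bucket count exactly. Thus, with probability at least $1 - \delta$, at least a $3/4$ fraction of the estimators $\hat{G}_j$ lie inside the Frobenius ball of radius $r$ around $G$.

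For step (iii), I invoke the key robustness property of the geometric median in a Hilbert space (Lemma~2.1 of \cite{minsker2015geometric}): if more than a $1-\alpha$ fraction of the inputs lie in a ball of radius $r$ around some point $z$ for $\alpha \in (0, 1/2)$, then the geometric median $\hat{G}$ satisfies $\norm{\hat{G} - z}_F \leq C_\alpha\, r$ for a constant $C_\alpha$ depending only on $\alpha$. Applying this with $z = G$ and $\alpha = 1/4$ on the event from step (ii) yields $\norm{\hat{G} - G}_F \leq C_{1/4}\sqrt{p}\, r_0$, and dominating the spectral norm by the Frobenius norm, together with $1/M = K/N \lesssim \log(1/\delta)/N$, reproduces \eqref{eqn:boosting_lemma}. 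The argument is essentially bookkeeping; the one delicate piece is coordinating the three constants---per-bucket failure probability $q = 1/8$, Hoeffding threshold $1/4$, and Minsker's constant $C_{1/4}$---so that the factor of $32$ in $K$ aligns exactly with the Hoeffding exponent and all universal constants can be absorbed into the implicit constants of $C_1$ and $C_2$.
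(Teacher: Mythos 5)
Your proposal is correct and follows essentially the same route as the paper's proof: weak per-bucket bound with $q=1/8$, the rank-$p$ conversion $\norm{\hat{G}_j - G}_F \leq \sqrt{p}\,\norm{\hat{G}_j - G}$, Hoeffding's inequality giving the exponent $-K/32$, and Minsker's geometric-median robustness lemma with $\alpha = 1/4$ (you state it in contrapositive form, which is equivalent), followed by the substitution $M = N/K$. The only cosmetic difference is that the paper runs the implication chain from the ``bad median'' event to the count of deviating buckets, while you count good buckets first and then invoke the median lemma, and both treatments absorb the constant $C_\alpha\sqrt{32}$ into the universal constants inside $C_1, C_2$.
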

\begin{proof}
    We leverage Lemma \ref{lemma:geometric_median} from Appendix \ref{app:preliminary_results}, adapted from Lemma 2.1 in \cite{minsker2015geometric}, as a key component of the proof. Let $\varepsilon = (\sqrt{8/M}) (1/\sigma_u)(\sigma_vC_1 +  \sigma_wC_2)$, and set $\alpha = 1/4$, which makes $C_\alpha = 3/(2\sqrt{2})$. Now, we define a ``bad" event as the geometric median of the OLS estimators, $\hat{G}$, being far from the true Markov parameter $G$ as in the statement of Lemma \ref{lemma:geometric_median}. More precisely, the ``bad" event indicates $\norm{\hat{G} - G} > C_\alpha\sqrt{p}\varepsilon$, where $p$ is the dimension of the measurements from \eqref{eqn:sys_model}. Furthermore, we define indicator random variables of the events $\norm{\hat{G}_j - G}_F \geq \sqrt{p}\varepsilon$ and $\norm{\hat{G}_j - G} \geq \varepsilon$ as $S_j$ and $Z_j$, respectively, for all $j \in [K]$. Based on these definitions, we have
    \begin{align*}
        \norm{\hat{G} - G} > C_\alpha\sqrt{p}\varepsilon &\overset{(a)}{\implies} \norm{\hat{G} - G}_F > C_\alpha\sqrt{p}\varepsilon \\
        &\overset{(b)}{\implies} \sum_{j \in [K]} S_j > K/4 \\
        &\overset{(c)}{\implies} \sum_{j \in [K]} Z_j > K/4.
    \end{align*}
    In the above steps, $(a)$ follows from the fact that $\norm{\hat{G} - G} \leq \norm{\hat{G} - G}_F$, $(b)$ due to Lemma \ref{lemma:geometric_median} since we have defined the indicator random variables $S_j$ appropriately, and $(c)$ follows from the fact\footnote{For a matrix $A \in \mathbb{R}^{m \times n}$ with singular values $\{\sigma_1 \geq \ldots \geq \sigma_{r(A)} \}$ where $r(A)$ is the rank of $A$, $\norm{A}_F = \sqrt{\sigma_1^2 + \ldots + \sigma_{r(A)}^2}$ and $\norm{A} = \sigma_1(A)$. Therefore, we have $\norm{A} \leq \norm{A}_F \leq \min(m, n)\norm{A}$.} that $\norm{\hat{G}_j - G}_F \leq \sqrt{p} \norm{\hat{G}_j - G}$. Based on the above implications, we have
    \begin{align}
        \mathbb{P}\left( \norm{\hat{G} - G} > C_\alpha \sqrt{p}\varepsilon \right) &\leq \mathbb{P} \left( \sum_{j \in [K]} Z_j \geq K/4 \right). \label{eqn:boosting_proof_1}
    \end{align}
    Since each of the $Z_j$'s are i.i.d. random variables in $\{0,1\}$, we use Hoeffding's inequality to infer that
    \begin{equation*}
        \begin{aligned}
        \mathbb{P} \left( \sum_{j \in [K]} Z_j \geq K/4 \right) &\overset{(a)}{=} \mathbb{P} \left( \frac{1}{K} \sum_{j \in [K]} (Z_j - \E[Z_1]) \geq \frac{1}{4} - \E[Z_1] \right) \\
        &\overset{(b)}{\leq} \exp{\left(-2 K (1/4 - \E[Z_1])^2\right)} \\
        & \overset{(c)}{\leq} \exp{\left(-2 K (1/4 - 1/8)^2\right)},
    \end{aligned}
    \end{equation*}
    where $(a)$ follows by subtracting the common mean as the rollouts are identically distributed, $(b)$ due to Hoeffding's inequality, and $(c)$ follows due to $\E[Z_1] = \mathbb{P}(\norm{\hat{G}_1 - G} \geq \varepsilon) \leq q = 1/8.$ Using the above bound in \eqref{eqn:boosting_proof_1}, we have
    $$ \mathbb{P}\left(\norm{\hat{G} - G} > C_\alpha \sqrt{p}\varepsilon \right) \leq \exp(-K/32) \leq \delta,$$
    when $K=\ceil{32 \log(1/\delta)}$. Therefore, with probability at least $1 - \delta$, we have $\norm{\hat{G} - G} \leq C_\alpha \sqrt{p}\varepsilon$. Using $M = N/K$ in $\varepsilon$, completes the proof.
\end{proof}

\textbf{Remark 3. (Importance of boosting):} From the proof of Lemma \ref{lemma:boosting}, it is clear that the boosting step is crucial to get a logarithmic dependence on the failure probability. This was made possible by the robustness of the geometric median. To see this, consider a scalar case where the geometric median is simply the scalar median. For the median $\hat{G}$ to deviate from $G$ beyond a tolerance $\varepsilon$, at least half of the OLS estimates from the $K$ buckets must also deviate beyond $\varepsilon$. Although the probability of an individual estimate deviating could be large (at most $1/8$ in the proof), half of them \emph{deviating simultaneously diminishes the overall failure probability}. To extend this intuition to the vector case, Lemma \ref{lemma:geometric_median} as adapted from Lemma 2.1 in \cite{minsker2015geometric} uses the geometric median with respect to the Frobenius norm as shown in \eqref{eqn:geo_median}. 

Observe that Lemma \ref{lemma:boosting} also informs the choice of the number of buckets, $K$, in our algorithm. In particular, $K$ depends only of the confidence level $\delta$ as $K = \ceil{32 \log(1/\delta)}$.

\textbf{Remark 4. (Additional dimensional factors):} As mentioned earlier, we incur an extra multiplicative factor of $T \max(n, m, p)$ in our bounds compared to that in \cite{zheng2021nonasymptotic}. This is because the heavy-tailed noise under our study precludes the use of sub-Gaussian and sub-exponential concentration tools with logarithmic dependence on the error probability. The logarithmic factor controls the exponential dependence on dimensions that appear while applying union bounds, a property widely used in previous works with Gaussian /sub-Gaussian noise \cite{zheng2021nonasymptotic}, \cite{dean2020sample}.   

On the other hand, due to unavailability of such concentration tools for general heavy-tailed noise processes, we use Markov's inequality in the proofs of Lemma \ref{lemma:invertibility}-\ref{lemma:measurement_noise}, causing the additional dimension terms in our bounds. More precisely, in the proof of Lemma \ref{lemma:process_noise}, we incur a factor of $T$ as we bound the block matrices and use union bound to transfer the bounds to the complete matrix which is composed on $T^2$ block matrices. Furthermore, we also incur a dimensional factor as our bound involves computing expectation of Frobenius norms of rank 1 random matrices. For instance, in the proof of Lemma \ref{lemma:measurement_noise}, using Markov's inequality leads to $\E[\norm{v_0^{(1)}(u_0^{(1)})^\top}_F^2] = \sigma_v^2\sigma_u^2pm$. On the contrary, using sub-Gaussian tail bounds leads to a factor $(m+p)$ as shown in \cite{zheng2021nonasymptotic}.  

Finally, one can obtain the estimates of $(A, B, C, D)$ up to a similarity transformation by using the Ho-Kalman algorithm on $\hat{G}$, the estimated Markov parameters from step 3. Based on the robustness analysis of the Ho-Kalman algorithm as presented in \cite{oymak2019non}, the bounds on the system parameters are upper bounded by $\norm{\hat{G} - G}$ scaled by system and dimension-dependent constants. As a result, the error bounds on the system parameters retain the $\log(1/\delta)$ dependence on the failure probability from Theorem~\ref{thm:markov_parameters}. 

\section{Conclusion and Future Work}
In this work, we considered the problem of system identification for partially observed LTI systems with heavy-tailed noise processes that admit only the second moment. We showed, by constructing an algorithm and the subsequent finite sample analysis, that one can achieve sample complexity bounds with a logarithmic dependence on failure probability. We also showed that our bounds almost match the sub-Gaussian bounds with extra dimensional factors. 

In the future, it would be of great interest to understand the tightness of the dimension dependencies by deriving information-theoretic lower bounds for the heavy-tailed noise settings. Furthermore, we would like to derive lower bounds taking into account the hardness of learning the system parameters as shown in \cite{tsiamis2021linear}. Another interesting direction would be to provide sample complexity results for system identification based on data from a single rollout. This problem is significantly harder owing to temporally correlated observations. Finally, whether the ideas used in this paper can be extended to heavy-tailed non-linear system identification remains an open problem. We hope that this work motivates further exploration of system identification under non-ideal assumptions on the noise and excitation processes.  

\bibliographystyle{unsrt}
\bibliography{references.bib}

\appendix
\section{Preliminary results}\label{app:preliminary_results}
In the proofs of Lemmas \ref{lemma:invertibility}, \ref{lemma:process_noise}, and \ref{lemma:measurement_noise}, we bound the Frobenius norm of random matrices. In this regard, the following result defines a variance statistic, $\text{var}(\cdot)$, and shows how it exploits independence. 
\begin{lemma} \label{lemma:var_statistic}
    Let $\{X_i\}_{i \in [n]}$ be a collection of independent matrices. Define $$\text{var}(X_i) \triangleq \E[\normlr{X_i - \E[X_i]}_F^2] = \E[\norm{X_i}_F^2] - \norm{\E[X_i]}_F^2.$$ We have $$\text{var}\left(\sum_{i \in [n]} X_i \right) = \sum_{i \in [n]} \text{var}(X_i).$$
\end{lemma}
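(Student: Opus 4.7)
The plan is to reduce everything to the well-known scalar variance identity by exploiting the fact that the squared Frobenius norm decomposes as a sum of squares of independent real-valued entries, via $\|X\|_F^2 = \mathrm{tr}(X^\top X) = \sum_{i,j} X_{ij}^2$. In particular, the Frobenius inner product $\langle A, B\rangle_F = \mathrm{tr}(A^\top B)$ is bilinear and satisfies $\|A+B\|_F^2 = \|A\|_F^2 + 2\langle A,B\rangle_F + \|B\|_F^2$, which is the only algebraic identity I need.

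First I would verify the two equivalent expressions for $\mathrm{var}(X_i)$. Expanding
\[
\|X_i - \E[X_i]\|_F^2 = \|X_i\|_F^2 - 2\langle X_i, \E[X_i]\rangle_F + \|\E[X_i]\|_F^2,
\]
and taking expectation, the cross term becomes $2\langle \E[X_i], \E[X_i]\rangle_F = 2\|\E[X_i]\|_F^2$ (using that $\E[\langle X_i, M\rangle_F] = \langle \E[X_i], M\rangle_F$ for any deterministic $M$, by linearity). This yields $\mathrm{var}(X_i) = \E[\|X_i\|_F^2] - \|\E[X_i]\|_F^2$, confirming the definition.

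Next, for the additivity claim, let $\bar{X}_i \triangleq X_i - \E[X_i]$, so $\E[\bar{X}_i] = 0$. Then
\[
\mathrm{var}\!\left(\sum_{i \in [n]} X_i\right) = \E\!\left[\Big\|\sum_{i \in [n]} \bar{X}_i\Big\|_F^2\right] = \sum_{i \in [n]} \E[\|\bar{X}_i\|_F^2] + \sum_{i \neq j} \E[\langle \bar{X}_i, \bar{X}_j\rangle_F].
\]
For $i \neq j$, the matrices $\bar{X}_i$ and $\bar{X}_j$ are independent (as measurable functions of the independent $X_i, X_j$), and the Frobenius inner product is a finite sum of products of their entries. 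Hence $\E[\langle \bar{X}_i, \bar{X}_j\rangle_F] = \langle \E[\bar{X}_i], \E[\bar{X}_j]\rangle_F = 0$. The surviving terms are exactly $\sum_i \mathrm{var}(X_i)$, yielding the claim.

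The main obstacle is essentially nil: this is a direct matrix analogue of the scalar identity $\mathrm{Var}(\sum_i Y_i) = \sum_i \mathrm{Var}(Y_i)$ for independent $Y_i$. The only point that requires mild care is justifying that independence of matrix-valued $X_i, X_j$ implies the Frobenius cross-term vanishes; this follows either entry-wise (writing $\langle \bar{X}_i, \bar{X}_j\rangle_F = \sum_{k,\ell} (\bar{X}_i)_{k\ell}(\bar{X}_j)_{k\ell}$ and using scalar independence) or via the tensorization property $\E[f(X_i)g(X_j)] = \E[f(X_i)]\,\E[g(X_j)]$ applied to the bilinear form.
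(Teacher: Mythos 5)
Your proof is correct. The paper does not include its own proof of this lemma (it defers to Lemma 9 of the cited reference \cite{kanakeri2024outlier}), but your argument — expanding the squared Frobenius norm via the Frobenius inner product, centering, and using independence to annihilate the cross terms $\E[\langle \bar{X}_i, \bar{X}_j\rangle_F]$ — is the standard Hilbert-space variance-additivity argument that such a proof uses, and every step checks out.
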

See Lemma 9 in \cite{kanakeri2024outlier} for proof. 

The following result captures the robustness of the geometric median and is used in the proof of Lemma \ref{lemma:boosting}.
\begin{lemma} \textbf{(Property of the geometric median):}
\label{lemma:geometric_median}
Let $A_1, \ldots, A_K \in \mathbb{R}^{m \times n}$ and let $A_*$ be their geometric median with respect to the Frobenius norm: $$A_* = \texttt{Med}(A_1, \ldots, A_K) := \argmin_{\theta \in \mathbb{R}^{m \times n}} \sum_{j \in [K]} \Vert \theta - A_j \Vert_F.$$ Fix $\alpha \in (0, 0.5)$ and assume $A \in \mathbb{R}^{m \times n}$ is such that $\normlr{A_* - A}_F > C_\alpha r$, where $C_\alpha = (1 - \alpha) \sqrt{1/(1 - 2\alpha)}$ and $r > 0$. Then,  there exists a subset $J \subseteq [K]$ of cardinality $\abs{J} > \alpha K$ such that for all $j \in J$, $\normlr{A_j - A}_F > r$.
\end{lemma}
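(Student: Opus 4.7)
The plan is to argue by contrapositive: assume the set $J = \{j \in [K] : \Vert A_j - A \Vert_F > r\}$ has size at most $\alpha K$, and deduce $\Vert A_* - A \Vert_F \leq C_\alpha r$ (noting that the case $\Vert A_* - A \Vert_F \leq r$ is trivial because $C_\alpha \geq 1$ for $\alpha \in [0, 1/2)$). The main tool will be the first-order optimality condition satisfied by the geometric median $A_*$ as the minimizer of the convex map $f(\theta) = \sum_{j \in [K]} \Vert \theta - A_j \Vert_F$ on the Hilbert space $(\mathbb{R}^{m \times n}, \langle \cdot, \cdot \rangle_F)$.

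Assuming for simplicity that $A_* \neq A_j$ for every $j$ (otherwise appropriate elements of the unit-ball subdifferential are selected at the coincident indices), stationarity reads
\[
\sum_{j \in [K]} q_j = 0, \quad q_j := \frac{A_* - A_j}{\Vert A_* - A_j \Vert_F}.
\]
Writing $d = \Vert A_* - A \Vert_F$ and $v = (A_* - A)/d$ and taking the Frobenius inner product of the displayed identity with $v$ produces the scalar equation $\sum_{j \in [K]} \langle q_j, v \rangle_F = 0$. The argument then reduces to lower-bounding each $\langle q_j, v \rangle_F$ in a way that separates the ``good'' indices $J^c$ from the ``bad'' indices $J$.

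For $j \in J^c$, set $b = \Vert A_j - A \Vert_F \leq r$ and decompose $A_j - A = a v + \eta$ with $\eta$ Frobenius-orthogonal to $v$, so that $a^2 + \Vert \eta \Vert_F^2 = b^2$ and $|a| \leq b$. A direct computation gives $\langle q_j, v \rangle_F = (d - a)/\sqrt{d^2 - 2 d a + b^2}$. A one-variable calculus exercise in $a$ (the interior critical point sits at $a = b^2/d$, where this expression equals $\sqrt{1 - b^2/d^2}$, whereas both endpoints $a = \pm b$ yield value $1$) then shows $\langle q_j, v \rangle_F \geq \sqrt{1 - b^2/d^2} \geq \sqrt{1 - r^2/d^2}$, provided $d > r$. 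For $j \in J$ I would use only the trivial bound $\langle q_j, v \rangle_F \geq -1$.

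Plugging these two estimates into $\sum_j \langle q_j, v \rangle_F = 0$ forces $(1 - \alpha) K \sqrt{1 - r^2/d^2} \leq \alpha K$, which via the algebraic identity $1 - \alpha^2/(1 - \alpha)^2 = (1 - 2\alpha)/(1 - \alpha)^2$ (valid since $\alpha < 1/2$) rearranges to $d \leq (1 - \alpha) r / \sqrt{1 - 2\alpha} = C_\alpha r$, contradicting $\Vert A_* - A \Vert_F > C_\alpha r$. The main obstacle will be the variational step producing the $\sqrt{1 - r^2/d^2}$ lower bound: although it reduces to a short calculus calculation, correctly identifying the worst-case alignment of $A_j - A$ with the $A \to A_*$ direction, and then extracting the precise constant $C_\alpha$, is the delicate core of the argument on which the whole proof hinges.
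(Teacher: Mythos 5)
Your proof is correct. One point of comparison worth noting: the paper itself does not prove this lemma at all --- it simply cites Lemma 2.1 of Minsker (2015) --- so your argument is a self-contained reconstruction of what is, in essence, Minsker's original proof: first-order (subgradient) optimality of the geometric median, an inner product against the unit direction $v$ from $A$ to $A_*$, a worst-case-alignment estimate for the ``good'' indices, and a counting step that produces the constant $C_\alpha$. I verified the delicate step you flagged: with $g(a) = (d-a)/\sqrt{d^2 - 2da + b^2}$ one has $g'(a) = (da - b^2)/(d^2 - 2da + b^2)^{3/2}$, so the interior minimum is at $a = b^2/d \in [-b,b]$ (valid since $b \leq r < d$) with value $\sqrt{1 - b^2/d^2} \geq \sqrt{1 - r^2/d^2}$, and the endpoints give $1$; combining with the trivial bound $-1$ on the at most $\alpha K$ bad indices, stationarity forces $(1-\alpha)\sqrt{1 - r^2/d^2} \leq \alpha$, hence $d \leq (1-\alpha) r/\sqrt{1 - 2\alpha} = C_\alpha r$, which is exactly the contrapositive. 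Two small details that make the write-up airtight, both of which you handle at least implicitly: (i) when $A_* = A_j$ for some $j$, that index necessarily satisfies $\normlr{A_j - A}_F = d > r$ in the case under consideration, so it lies in $J$ and the subdifferential element at that index only needs the $\geq -1$ bound, as you use; (ii) the reduction to the case $d > r$ via $C_\alpha \geq 1$, which holds because $(1-\alpha)^2 \geq 1 - 2\alpha$. Your proof could be inserted verbatim to make the paper self-contained where it currently defers to the citation.
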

See Lemma 2.1 in \cite{minsker2015geometric} for the proof of the above lemma.

\section{Proofs of Lemmas \ref{lemma:invertibility}-\ref{lemma:measurement_noise}} \label{app:proofs}
We begin with the proof of Lemma \ref{lemma:invertibility} which provides a lower bound for $\lambda_{\min}(U_jU_j^\top)$.

\subsection{Proof of Lemma \ref{lemma:invertibility}}
Defining 
\begin{equation*} 
    \begin{aligned}
    \hat{u}_t^{(i)} = \begin{bmatrix}
    (u_t^{(i)})^\top & (u_{t-1}^{(i)})^\top & \ldots & (u_0^{(i)})^\top & 0 & \ldots & 0
    \end{bmatrix}^\top \in \mathbb{R}^{mT},
    \end{aligned}
\end{equation*}
we have 
\begin{align*}
    U_jU_j^\top = \sum_{i \in \mc{B}_j} U^{(i)}(U^{(i)})^\top = \sum_{i \in \mc{B}_j} \sum_{t = 0}^{T-1}\hat{u}_t^{(i)}(\hat{u}_t^{(i)})^\top.
\end{align*}
Using the Weyl's inequality (Theorem 4.3.1 from \cite{horn2012matrix}), we have $\lambda_{\min}(U_jU_j^\top) \geq \sum_{t = 0}^{T-1}\lambda_{\min}(\sum_{i \in \mc{B}_j} \hat{u}_t^{(i)}(\hat{u}_t^{(i)})^\top)$. Notice that $\lambda_{\min}(\sum_{i \in \mc{B}_j} \hat{u}_t^{(i)}(\hat{u}_t^{(i)})^\top) \geq 0$, since $\sum_{i \in \mc{B}_j} \hat{u}_t^{(i)}(\hat{u}_t^{(i)})^\top \succeq 0$ for all $t \in \{0, 1, \ldots, T-1\}$. Therefore, we can further lower bound $\lambda_{\min}(U_jU_j^\top)$ by only considering the last index, $t = T-1$, in the summation. Hence, we obtain $\lambda_{\min}(U_jU_j^\top) \geq \lambda_{\min}(\sum_{i \in \mc{B}_j} \hat{u}_{T-1}^{(i)}(\hat{u}_{T-1}^{(i)})^\top)$. We choose the index $T-1$ because from the definition of $\hat{u}_t^{(i)}$, $\lambda_{\min}(\hat{u}_t^{(i)}(\hat{u}_t^{(i)})^\top) = 0$ for all $t \in \{0, 1, \ldots, T-2\}$, which does not lead to a meaningful lower bound. 

To bound $\lambda_{\min}(\sum_{i \in \mc{B}_j} \hat{u}_{T-1}^{(i)}(\hat{u}_{T-1}^{(i)})^\top)$, consider the following:
\begin{align}
    \sum_{i \in \mc{B}_j} \hat{u}_{T-1}^{(i)}(\hat{u}_{T-1}^{(i)})^\top &= \E\left[\sum_{i \in \mc{B}_j} \hat{u}_{T-1}^{(i)}(\hat{u}_{T-1}^{(i)})^\top \right] + X \nonumber\\
    &= M \sigma_u^2 I_{mT} + X, \label{eqn:den_1}
\end{align}
where we used the fact that there are $M$ i.i.d rollouts in $\mc{B}_j$ and $\E[\hat{u}_{T-1}^{(i)}(\hat{u}_{T-1}^{(i)})^\top] = \sigma_u^2I_{mT}$, and defined $X := \sum_{i \in \mc{B}_j} \hat{u}_{T-1}^{(i)}(\hat{u}_{T-1}^{(i)})^\top -  M \sigma_u^2 I_{mT}$. Using the Markov's inequality on $\norm{X}_F^2$, we have for any $\gamma > 0$
\begin{align}
    \mathbb{P}(\norm{X}_F^2 \geq \gamma^2) &\leq (1/\gamma^2)\E[\norm{X}_F^2] \nonumber\\
    & \overset{(a)}{=} (1/\gamma^2) M\E[\norm{ \hat{u}_{T-1}^{(1)}(\hat{u}_{T-1}^{(1)})^\top - \sigma_u^2 I_{mT}}_F^2] \nonumber\\
    & = (1/\gamma^2) M \E[\norm{\hat{u}_{T-1}^{(1)}(\hat{u}_{T-1}^{(1)})^\top}_F^2] - \sigma_u^4 mT \nonumber \\
    & \overset{(b)}{\leq} (1/\gamma^2) M \E[\norm{\hat{u}_{T-1}^{(1)}(\hat{u}_{T-1}^{(1)})^\top}_F^2] \nonumber\\
    & = (1/\gamma^2) M \E[(\norm{\hat{u}_{T-1}^{(1)}}^2)^2] \nonumber\\
    & \overset{(c)}{\leq} (1/\gamma^2) M (mT)^2 \tilde{\sigma}_u^4, \label{eqn:den_2}
\end{align}
where $(a)$ follows from Lemma \ref{lemma:var_statistic} by observing that $X$ is a sum of i.i.d matrices, $(b)$ by dropping the term $\sigma_u^4 mT$, and $(c)$ by observing that 
\begin{align*}
    \E[\norm{\hat{u}_{T-1}^{(1)}}^4] &= \E\left[\sum_{i = 1}^{mT} (\hat{u}_{T-1}^{(1)}(i))^4 \right] \\
    & \quad +\E\left[\sum_{i \neq l = 1}^{mT} (\hat{u}_{T-1}^{(1)}(i))^2 (\hat{u}_{T-1}^{(1)}(l))^2 \right] \\
    &\overset{(d)}{=} mT\tilde{\sigma}_u^4 + ((mT)^2 - mT)\sigma_u^4 \\
    &\overset{(e)}{\leq} (mT)^2 \tilde{\sigma}_u^4,
\end{align*}
where $(d)$ follows based on the noise assumptions \eqref{eqn:noise_model}, and $(e)$ as $\tilde{\sigma}_u^4 \geq \sigma_u^4$ due to Jensen's inequality. Choosing $\gamma = M\sigma_u^2/2$ and setting the R.H.S $\leq q/3$ in \eqref{eqn:den_2}, we obtain the bound for the number of rollouts $M \geq 12(1/q)(mT)^2 (\tilde{\sigma}_u^4/\sigma_u^4)$. On the successful event, with probability at least $1 - q/3$, $\norm{X}_F \leq  M\sigma_u^2/2$, which implies $\norm{X} \leq  M\sigma_u^2/2$. Since spectral radius is less than the spectral norm, we have $\lambda_{\min}(X) \geq - M\sigma_u^2/2$. Finally, based on \eqref{eqn:den_1}, $\lambda_{\min}(\sum_{i \in \mc{B}_j} \hat{u}_{T-1}^{(i)}(\hat{u}_{T-1}^{(i)})^\top) \geq M \sigma_u^2/2$.

\subsubsection{Proof of Lemma \ref{lemma:process_noise}} 
Following an approach similar to the proof of Proposition 3.3 in \cite{zheng2021nonasymptotic}, consider the $(k, l)$-th block, $(W_jU_j^\top)_{k,l} \in \mathbb{R}^{n \times m}$, of the matrix $W_jU_j^\top$. We have $\forall k \in [T], \forall l \in [T]$ $$(W_jU_j^\top)_{k,l} = \sum_{i \in \mc{B}_j} \sum_{t = 0}^{T-k} w_t^{(i)}(u_{t+k-l}^{(i)})^\top.$$ We proceed by bounding the norm of each block. Defining $X_{k, l} =(W_jU_j^\top)_{k,l}$ and $T_{k, l} := T - \max(k, l) + 1$, we have
\begin{align}
    \mathbb{P}(\norm{X_{k, l}}_F^2 \geq \gamma^2) &\leq (1/\gamma^2)\E[\norm{X_{k, l}}_F^2] \nonumber\\
    & \overset{(a)}{=} (1/\gamma^2) MT_{k, l}\E[\norm{w_0^{(1)}(u_0^{(1)})^\top}_F^2] \nonumber \\
    & \overset{(b)}{=} (1/\gamma^2) MT_{k, l}\E[\norm{w_0^{(1)}}^2]\E[\norm{u_0^{(1)}}^2] \nonumber \\
    & \overset{}{=} (1/\gamma^2) MT_{k, l} \sigma_u^2\sigma_w^2 nm \nonumber \\
    & \overset{(c)}{\leq} (1/\gamma^2) MT \sigma_u^2\sigma_w^2 nm, \label{eqn:num_1}
\end{align}
where $(a)$ follows from Lemma \ref{lemma:var_statistic} and observing that there are $MT_{k, l}$ i.i.d copies of $w_0^{(1)}(u_0^{(1)})^\top$ in $X_{k, l}$, and that $w_0^{(1)}$ and $u_0^{(1)}$ are independent and zero mean random vectors; $(b)$ due to the following: $\norm{w_0^{(1)}(u_0^{(1)})^\top}_F^2 = \tr{w_0^{(1)}(u_0^{(1)})^\top u_0^{(1)}(w_0^{(1)})^\top} = \norm{u_0^{(1)}}^2\norm{w_0^{(1)}}^2$. Finally, $(c)$ follows as $T_{k, l} \leq T, \forall k \in [T], \forall l \in [T]$. Setting the R.H.S $= q/(3T^2)$ in \eqref{eqn:num_1}, we get with probability at least $1 - q/(3T^2)$, $\norm{X_{k, l}}_F^2 \leq 3(1/q)T^3\sigma_u^2\sigma_w^2Mnm$. Applying an union bound over the $T^2$ blocks in $W_jU_j^\top$, with probability at least $1 - q/3$, $\norm{X_{k, l}}_F^2 \leq  3(1/q)T^3\sigma_u^2\sigma_w^2Mnm$, $\forall k \in [T], \forall l \in [T]$. To transfer the bound from the block matrices to the complete matrix $W_jU_j^\top$, notice that $\norm{W_jU_j^\top}_F^2 = \sum_{k \in [T]}\sum_{l \in [T]} \norm{X_{k, l}}_F^2$. Hence, we have $\norm{W_jU_j^\top}_F^2 \leq  3(1/q)T^5\sigma_u^2\sigma_w^2Mnm$. Finally, the proof is complete by using $\norm{W_jU_j^\top} \leq \norm{W_jU_j^\top}_F$.

\subsubsection{Proof of Lemma \ref{lemma:measurement_noise}}
The matrix $V_jU_j^\top \in \mathbb{R}^{p \times mT}$ can be represented as
\begin{equation*}
    \begin{aligned}
    V_jU_j^\top = \sum_{i \in \mc{B}_j}\begin{bmatrix}
        \sum_{t = 0}^{T-1}v_t^{(i)}(u_t^{(i)})^\top& \sum_{t = 0}^{T-2}v_{t+1}^{(i)}(u_t^{(i)})^\top & \ldots & v_{T-1}^{(i)}(u_0^{(i)})^\top
    \end{bmatrix}.
    \end{aligned}
\end{equation*}
We first bound the $p \times m$ block matrices $\sum_{i \in \mc{B}_j}\sum_{k=0}^{T-l-1}v_{k+l}^{(i)}(u_k^{(i)})^\top$ for all $l \in \{0, 1, \ldots, T-1\}$, and transfer the bounds to the complete matrix $V_jU_j^\top$. To bound the block matrices, we use the Markov's inequality and follow the analysis similar to the proof of Lemma \ref{lemma:process_noise}. Defining $X_l:= \sum_{i \in \mc{B}_j}\sum_{k=0}^{T-l-1}v_{k+l}^{(i)}(u_k^{(i)})^\top$, we have 
\begin{align}
    \mathbb{P}(\norm{X_l}_F^2 \geq \gamma^2) &\leq (1/\gamma^2)\E[\norm{X_l}_F^2] \nonumber \\
    &= (1/\gamma^2)M(T-l)\E[\norm{v_{0}^{(1)}(u_0^{(1)})^\top}_F^2] \nonumber \\
    &= (1/\gamma^2)M(T-l)\E[\norm{v_0^{(1)}}^2]\E[\norm{u_0^{(1)}}^2] \nonumber \\
    &= (1/\gamma^2)M(T-l)\sigma_u^2\sigma_v^2pm \nonumber \\
    &\leq (1/\gamma^2)MT\sigma_u^2\sigma_v^2pm, \label{eqn:num_2}
\end{align}
where the first step follows from Lemma \ref{lemma:var_statistic} and observing that there are $M(T-l)$ i.i.d copies of $v_{0}^{(1)}(u_0^{(1)})^\top$ in $X_l$, and that $v_0^{(1)}$ and $u_0^{(1)}$ are independent and zero mean random vectors. The remaining steps follow based on arguments made in the proof of Lemma \ref{lemma:process_noise}. Setting the R.H.S = $q/(3T)$ in \eqref{eqn:num_2}, we have with probability at least $1 - q/(3T)$, $\norm{X_l}_F^2 \leq 3(1/q)T^2\sigma_v^2 \sigma_u^2Mpm$. Applying an union bound over the $T$ blocks in $V_jU_j^\top$, with probability at least $1 - q/3$, $\norm{X_l}_F^2 \leq  3(1/q)T^2\sigma_v^2 \sigma_u^2Mpm$, for all $l \in \{0, 1, \ldots, T-1\}$. Finally, since $\norm{V_jU_j^\top}_F^2 = \sum_{l = 0}^{T-1} \norm{X_l}_F^2$, we have $\norm{V_jU_j^\top}_F^2 \leq 3(1/q)T^3\sigma_v^2 \sigma_u^2Mpm$. The proof is complete by using $\norm{V_jU_j^\top} \leq \norm{V_jU_j^\top}_F$.

\end{document}